%
%
\documentclass[twocolumn,nofootinbib,amsmath]{revtex4}  
\usepackage{graphicx}  
\usepackage{dcolumn}   
\usepackage{bm}        
\usepackage{amssymb}   
\usepackage{makecell} 

\usepackage{amsthm}
\newtheorem{thm}{Theorem}
\begin{document}

\title{Rediscovery of Cooper pair factor}
\author{M. Porforough}
\email{milaad@aut.ac.ir}

\affiliation{Department of Physics, School of Sciences, Tarbiat Modares University, P.O.Box 14155-4838, Tehran, Iran}

\date{\today}

\begin{abstract}
In superconductivity, the Cooper pair \cite{Ref1}  is a bound state of a pair of weakly interacting electrons in a metal
 which forces us to insert a factor of $2e$ (instead of $e$) in the phase of the wave function in the Aharonov-Bohm (AB) effect. We claim that the existence of such a factor is universal. In \cite{Ref2}, we proposed an explanation for the AB effect (or equivalently, a supersymmetric Dirac monopole) based on SUSY. It is shown that this model opens a path to recapture this factor without the need for arguments related to attractive pairs. The idea  is also applied to the case of a nonlinear sigma model (NLSM) when it is coupled to  $\mathcal{N}=1$, supergravity.
\end{abstract}

\maketitle

\section{Introduction}

A bound state of the Cooper pair  \cite{Ref1} is an elementary object which causes superconductivity. Cooper showed that the attraction of the pair
is originated by an electron-phonon interaction which causes the energy of the pair to lie lower than the Fermi energy. One year later, Bardeen, Cooper and Schrieffer developed the theory of the interacting Cooper pairs to describe the superconductivity which is known as the BCS theory \cite{Ref3}.

From the theoretical perspective, one can ask: can we prove the existence of the Cooper pair which implies that there is something more profound out there? Is the Cooper factor a universal notion or it is just restricted to the superconductivity context? We will try to find an answer to these types of questions  in the following sections. 

The wave function of an electron which moves along a closed path in the presence of a gauge connection $\mathbf{A}(\mathbf{r})$ is accompanied by a phase factor:
\begin{align}\label{1}
\psi(\mathbf{r})=\psi_0 (\mathbf{r}) \; \mathrm{e}^{-i(e  /\hbar) \oint \mathrm{d}\mathbf{r}.\mathbf{A}(\mathbf{r})},
\end{align}
where $\psi_0(\mathbf{r})$ denotes the wave function at  point $\mathbf{r}$ at the beginning and $\psi (\mathbf{r}) $ denotes the wave function at the same point after traveling around the closed path. The integral which appears in the exponent, represents the magnetic flux $\Phi_m$ enclosed by the loop. Note that in this paper, we will work in the SI units with the Weber convention. The wave function at $\mathbf{r}$ should be single-valued so we obtain the magnetic flux quantization:
\begin{align}\label{2}
\Phi_m=\frac{2\pi \hbar}{e}n \;\;\;\; n=0,\pm 1, \pm 2, \cdots .
\end{align}
Since a superconductor expels magnetic field lines, it is a unique material by which one can measure this flux and verify the relation  (\ref{2}). Experimental results show that an extra factor of $2$ appears in the denominator of the relation  (\ref{2}). Due to the Cooper pair, this is consistent with our theoretical understanding of superconductivity.

In the next section, we will consider a supersymmetric Dirac monopole and show that  in this model, the Cooper factor arises  automatically. In the third section, we will verify this prescription in the case of a NLSM coupled to the $d=4$, $\mathcal{N}=1$ supergravity. The paper closes in the fourth section, with a brief concluding remark. \\

\section{A supersymmetric Dirac monopole}

In this section, first, we will show that a supersymmetric Dirac monopole admits a geometrical phase which can be expressed in terms of  FI term, and after that, we will derive the Cooper factor which is nicely in agreement with the statement  that we made below the equation (\ref{2}). It is an intersecting D3-D3' system in which we can put some charged matters in the intersection region. The magnetic property of this set-up  was discovered by Mintun, Polchinski and Sun (MPS) in 2015 \cite{Ref4}. For our purposes, it is not necessary to use the  brane interpretation of the system because we just want to focus carefully on its gauge properties. The world-volume theory of each brane is a usual 4d super Yang-Mills theory with a compact gauge group U(1), each of which contains a region in which the charges are localized;  namely, the intersection region. We will develop the formulation from the gauge theory point of view. However, to illuminate some features like the number of supercharges or chiral multiplets, we will mention the constraints that arise from the string theory to make the discussion clearer.

We want to construct a supersymmetric Dirac monopole. The Dirac monopole is a point-like particle in a specific position in space e.g. the origin. Knowing the position of the monopole is a crucial fact to analyze its behavior because, as you may know, it is a topological object—i.e. the Maxwell equation $\nabla . \mathbf{B}=0$ holds everywhere in the space except at the position of the monopole. Therefore, we should remove this point from the space (the $\mathbb{R}^3$ manifold). The remaining manifold is homotopy equivalent to a $\mathbf{S}^2$ which surrounds the origin. To understand the deep topology behind  the AB effect and the Dirac monopole, see the famous paper \cite{Ref5}. 

So far, we have seen the importance of the position of the monopole. Now, in a 4d Lorentzian manifold, we want to put some U(1) charges such that the configuration preserves some amounts of supersymmetry and besides that, the location of charges should be seen directly. The simplest submanifold corresponding to the SO(1,3) algebra to put charges on it in this way, corresponds to the SO(1,1) algebra. So inside the 4d space with coordinates $x^\mu; \mu=0,1,2,3$, there is a special 2d Lorentzian submanifold with coordinates $x^\mu; \mu=0,1$, on which we can localize a matter content (belonging to a chiral multiplet), and then charge it by coupling it to the 4d gauge field (belonging to a vector multiplet), in a supersymmetric way. Adding this 2d action to the 4d sourceless gauge theory gives the full theory. At this time, we still do not know if the configuration contains BPS solutions for the magnetic monopole or not. Note that the supersymmetry imposes a strong constraint that if the Dirac monopole exists, it is no longer a point-like particle in the space; it is a line or, more literally, a line defect. This is because only a line can cause a breaking of the initial SO(1,3) to that SO(1,1). For this SO(1,1), it is possible to extend the corresponding Poincaré algebra to a super-Poincaré algebra which is also a 
${\displaystyle \mathbb {Z} _{2}}$-graded Lie superalgebra. This algebra is a non-trivial extension of the original Poincaré algebra that avoids the Coleman-Mandula’s no-go theorem. However, subject to supercharges $Q_{\pm},\bar{Q}_{\pm}$ in two dimensions, we have new operators that do not correspond to the generators of the (internal) gauge groups which should be related by commutators rather than anti-commutators. For example, recall the anti-commutator $ \left \{ Q_+,Q_- \right \}=Z^*$, where $Z$ is the complex central charge.

As explained  in \cite{Ref4}, firstly, the world-volume theory of each brane contains  magnetic solutions of the BPS equations when it is coupled to a NLSM. Secondly, one can construct the full theory by using some beautiful techniques. We do not want to get involved  with this procedure in detail because one can make a subtle guess to predict the general form of the most important sector of the final theory. For those who are interested in technical problems, we would say that the basis of the procedure is as follows: one can start with suitable $\mathcal{N}=1, d=4$ multiplets on the intersection region and use the bottom-top approach \cite{Ref6} to make a supersymmetric theory in the higher dimensions. So putting charges in a subregion is totally under control. If we consider this system as a tiny part of the string theory, then the number of supercharges is forced to be 8; but even 4 is sufficient for our purposes. In general, we can build up such a configuration with any amount of supercharges. The building blocks of this process,  as we mentioned above, are $\mathcal{N}=1, d=4$ multiplets. As long as the amount of supercharge increases, the construction becomes more and more difficult because there appear to be a large number of constraints such as the $\mathcal{R}$ symmetry which gets us into massive trouble.

We are interested only in the magnetic part of the Maxwell sector of the full theory. Fortunately, one can guess that the equation of motion of this sector is a usual 4d Maxwell equation in the presence of a magnetic source which is localized in regions on which we have $x_2=x_3=0$, by a suitable delta function. We express this equation by the language of differential forms:
\begin{align}\label{3}
\mathrm{d}\mathcal{F}=\frac{1}{c}*j_m.
\end{align}
Using the bottom-top approach, we can immediately conclude that the magnetic source arises from the D-term of the full theory. The delta function on the RHS of (\ref{3}) implies that at any instant of time, in every point outside of the line we have: $\mathrm{d}\mathcal{F}=0$. 

Our main goal is to extract the geometrical phase, so first we should investigate Poincar\'e's lemma. Outside of the line, $\mathcal{F}$ is closed. Can we conclude that $\mathcal{F}$ is exact in this region too? According to Poincar\'e's lemma, if $U$ denotes a coordinate neighbourhood of a manifold $M$, our answer is yes if and only if $U$ is shrinkable to a point $p_0$ in $M$. First, consider the usual point-like Dirac monopole placed at the origin of $\mathbb{R}^3$. For now, we do not need to worry about the time because the magnetic charge is conserved. We will discuss this shortly. As we mentioned earlier, at every point of the manifold $\mathbb{R}^3$$-$$\{0\}$ (which is homotopy equivalent to  a $\mathbf{S}^2$ that surrounds the origin), we have $\mathrm{d}\mathcal{F}=0$. This $\mathbf{S}^2$ can be seen as two hemispheres glued together (in an overlap region) such that the resulting sphere  surrounds the origin. So we can define: $ \mathcal{F}=\mathrm{d}A$, locally. In the overlap region, the connections of each of these coordinate neighbourhoods are related to each other by a gauge transformation. For the case of the line, first suppose that the line is placed at the $z$-axis, therefore, it is shrinkable to a point and localized at $x_2=x_3=0$. So on the manifold $\mathbb{R}^3$$-$$\{z$-$\mathrm{axis}\}$, we have $\mathrm{d}\mathcal{F}=0$. But this time the two hemispheres should be infinitely large in order to cover the infinite line. Therefore, we can still conclude that $\mathbb{R}^3$$-$$\{z$-$\mathrm{axis}\}$ is homotopy equivalent to  $\mathbf{S}^2$, but this time it should be an imaginary infinite sphere on which we can locally define $\mathcal{F}=\mathrm{d}A$. This is in agreement with the relation between the redefined and the usual field strengths  $\mathcal{F}$ and $F$, which, as we mentioned earlier, can be obtained  by focusing only on the D-term of the full theory:
\begin{align}\label{4}
&\mathcal{F}_{23} \equiv F_{23}+ f(\Phi) \; \delta^{(2)}(x_2,x_3), \\ \nonumber
&\mathcal{F}_{\mu \nu} \equiv F_{\mu\nu} \;\;\; \text{for} \;\;\;\mu\nu\neq 23,
\end{align} 
where $f(\Phi)$ is a real function of a complex scalar field $\Phi$, which belongs to a chiral multiplet $\boldsymbol{\Phi}$, contributes to the D-term and is localized at the line. In the absence of the line, the equation of motion (\ref{3}) is invariant under a $4d$ Lorentz transformation; namely, an element of the group SO(1,3). All in all, it turns out that our set-up contains a geometrical phase.

According to our discussions in \cite{Ref2}, the phase $ \mathrm{exp}(\frac{i}{ \hbar} S_{FI})$ (where $S_{FI}$ is the Fayet-Iliopoulos (FI) D-term which contributes to the intersection part of the full theory),  has something to do with the geometric phase. This is because we must have: $\frac{S_{FI}}{\hbar}=2\pi n$. Recall that
\begin{align}\label{5}
\frac{S_{FI}}{\hbar}=\frac{\chi}{2 \hbar} \int \mathrm{d}x^0 \; \mathrm{d}x^1 \; D(x^0,x^1,0,0),
\end{align} 
where $\chi$ denotes the FI parameter and D denotes the real non-propagating auxiliary scalar field in the vector multiplet. In the QFT of this model, the dimension of $\chi$ is $[\sqrt{\hbar}]$, but classically, we can get rid of this factor beautifully. In this limit, we can verify the dimension of the one-form gauge connection $\mathbf{A}$, either directly from the classical EoM (\ref{3}), or more conveniently from a suitable Wilson loop in the presence of an electric probe charge $e$:
\begin{align}\label{6}
\mathrm{exp} \; \big(\frac{i e}{\hbar}\oint \mathrm{d}\mathbf{r}. \mathbf{A} \big),
\end{align} 
where the closed loop is placed in the plane on which we have $x^0=x^1=\mathrm{cte}$ (which we call $\mathbb{C}$-plane), so the contributing components of $\mathbf{A}$ are only $A_{2,3}$. We obtain $[\mathbf{A}]=[\mathrm{Wb}/\mathrm{L}]$. Since $D \sim \frac{\mathrm{d}}{\mathrm{d}x} A$, we have $[D]=[\mathrm{Wb}/\mathrm{L^2}]$. Therefore, the integral in the equation (\ref{5}) has the dimension of the monopole strength $[g]=[\mathrm{Wb}]$, so we have $[\chi]=[Q]$. It looks nice but it's not enough yet. To see more of the dimensional analysis, see appendix A.

On the $\mathbb{C}$-plane, we can use the complex coordinates $z=\frac{1}{2}(x_2+ \mathrm{i}x_3)$.

In appendix B, we proved that on an infinitesimal loop in the $\mathbb{C}$-plane which surrounds the line, $A_z \propto 1/z$ is always a valid nonzero solution for classical EoM.

We can continue this solution to the regions far away from the line. We put these two together in a single real equation
\begin{align}\label{9}
\bar{\partial} A_z+c.c. \propto 4 \pi \delta^{(2)} (z,\bar{z}),
\end{align} 
which can be regarded as the BPS equation for $A_z$ which holds all over the $\mathbb{C}$-plane. Moreover, it is a FODE as expected.\footnote{This construction enables us to reproduce the D-term, independently. According to (\ref{9}), we can define $D=\bar{\partial} A_z+c.c. -4 \pi \delta^{(2)} (z,\bar{z})$. So the BPS equation for $D$, should be: $D|_{\mathrm{out \; of \; the\;  line}}=0$, and $D|_{ \mathrm{on \; the \; line}}$, should not contain a delta function. Note that the gauge transformations associated with those components tangent to the defect, do not affect the D-term, so we are in the WZ gauge as expected.}

Therefore, $D(x^0,x^1,0,0)$ remains fixed. According to this proof, we believe that nothing can keep us away from  regarding the relation (\ref{5}) as the dimensionless geometrical phase,  i.e. $S_{FI}/\hbar=2 \pi n$ where $[\chi]=[Q]$.

But the best thing is yet to come. In an overlap region, the general form of the geometrical phase for either the case of flux  (in a complex line bundle) or monopole (in a principal bundle) is
\begin{align}\label{10}
\frac{e}{\hbar} \Phi_m \;\;\;\; \mathrm{or} \;\;\;\;\frac{e}{\hbar} g=2 \pi n,
\end{align}
where $\Phi_m$ is  magnetic flux as in the equation (\ref{1}) and $g$ is the strength of a monopole in the Weber convention. As we explained below the equation (\ref{2}), in physical measurements, it does not yield the correct answer for the magnetic quantity. The integral which appears in the equation (\ref{5}) represents the homotopy class to which this bundle belongs, so without any extra coefficient, it should be equal to  $g$. Therefore, it satisfies
\begin{align}\label{11}
-2 \pi \int_{\mathbf{S}^2} c_1(\mathcal{F}) = \int_{\mathbf{S}^2} \mathcal{F}=\int_{\mathrm{on \;  the \; line}} \mathrm{d}^2x \; D=g,
\end{align}
which, as we mentioned in \cite{Ref2}, can be regarded as SYM statement of the Stokes' theorem. In the equation (\ref{11}), $c_1(\mathcal{F})$ denotes the first Chern class of the principal bundle.\footnote{In mathematics literature, you may find it as $c_1(\mathcal{F})=\frac{i \mathcal{F}}{2\pi}$. The relation between the two is given by $\mathcal{F}_{\mathrm{math}}=i\mathcal{F}_{\mathrm{Maxwell}}$.\label{footnote_1}}

Now, the equation
\begin{align}\label{12}
\frac{e}{\hbar} g=\frac{\chi}{2 \hbar} g=2 \pi n
\end{align}
yields $\chi=2e$, the fundamental unit of the electric charge appearing in the measurement of the phase for a fermionic test particle (object). For $e=1$, FI parameter equals to the Euler characteristic of the 2-sphere.

In equation (\ref{11}), we can let $x^0$ transform under a Wick rotation, then we should use $\mathcal{F}_{\mathrm{math}}$ from the footnote [\ref{footnote_1}]. In the dual theory, $g$ denotes  the electric charge $e$ and $\chi/2$ is the quantized magnetic charge.

Now, we are in the position to generalize our results. Suppose that in a 4d supersymmetric field theory with the compact gauge group U(1), one can find a 2d surface $M$ equipped with a closed two-form $\Omega$, with a nonzero $c_1(\Omega)$. Therefore, we need at least two coordinate neighbourhoods to cover $M$. The phase in the overlap region is determined by ($\hbar=e=1$)
\begin{align}\label{13}
-\frac{\chi}{2} \int_M c_1(\Omega)=n.
\end{align}

In string theory, since we have two charged matter multiplets on the line, there are two independent copies of this argument.

To give further evidence, let us finish this section by considering the case of $N$ charged chiral multiplets $\boldsymbol{\Phi}_i$. The values of the complex scalar fields $\Phi_i$ on the boundaries of the line determine the total magnetic charge $Q_m^ {\mathrm{tot}}$. We denote the boundary values of $\Phi_i$ by $\varphi_i$, so we have:
\begin{align}\label{25}
Q_m^{\mathrm{tot}} &=Q_m^1+ \cdots+ Q_m^N = \frac{1}{2} \big(|\varphi_1|^2+ \cdots + |\varphi_N|^2 \big) \\ \nonumber
 &=2 \pi m_1+ \cdots + 2 \pi m_N \equiv 2 \pi n,
\end{align}
where $m_i \in \mathbb{Z}$. Note that all sources belong to the same bundle which is represented by $2\pi n$.\footnote{It is compatible with the splitting of $\chi$ as follows: $\chi=\sum_{i=1}^N \chi_i$ } For each term in (\ref{25}), the FI (large) gauge transformation corresponds to a shift of $\chi_i/2=2 \pi m_i$. Therefore, in the dual theory we have $\frac{\chi}{2}=g=2 \pi n$, which implies that in the original theory in the SI units we must have $\frac{\chi}{2}=e$.
 
In order to find the BPS equations of the monopole in the Euclidean signature in which we have $(x_E^0,x_E^1) \subset \mathbb{R}^2$ , we should choose a suitable  K\"ahler manifold as the target manifold $M$ of the sigma model, on which the local complex coordinate $z$ is defined by the map $z: \mathbb{R}^2 \rightarrow M$. According to  (\ref{25}), in order to get a correct map, the convenient non-trivial compact manifold should be obtained by performing a one-point compactification on $\mathbb{R}^2$, namely, $\mathbb{R}^2 \cup \{\infty\}     =  \mathbf{S}^2$. Therefore, the target manifold admits the standard Fubini-Study metric on $\mathbb{C}P^1$. To check the periodicity of the holonomy, one can easily use the stereographic projection to read off $\mathrm{Im}(\mathrm{ln}\;  z)=\phi$, where $z$ is the complex coordinate of $\mathbb{C}P^1$, and $\phi$ is the $2\pi$ periodic azimuthal angle of $\mathbf{S}^2$.

One set of linearly independent solutions for $D$ in relation (\ref{11}) which satisfy $g=\int D= 2\pi n$, can be defined with the aid of the Gaussian integral by
\begin{align}
D=2 \; \mathrm{exp} \Big( -\frac{(x_E^0)^2+(x_E^1)^2}{n} \Big),
\end{align}
where $n \in \mathbb{Z}_{>0}$. One can read off the $\phi$ angle inside the integral by switching to polar coordinates.

In contrast to MPS \cite{Ref4}, we should emphasize that since the target is a K\"ahler manifold with non-zero first Chern class, one can prove that it does not admit a Ricci flat metric.

Finally, we finish this section by introducing the {\it Wick rotation of the holonomy}, as a 1d map between Lorentzian  and Euclidean  holonomy:
\begin{align}
\frac{1}{2} | \Phi |^2   \longrightarrow   \mathrm{Im}(\mathrm{ln}\;  z).
\end{align}

\section{The case of $\mathcal{N}=1$, supergravity}

The case of $\mathcal{N}=1$ supergravity in the presence of the FI term has been studied in \cite{Ref7,Ref8}. Now, we want to investigate it in terms of the arguments presented in the previous section.

In \cite{Ref9}, Bagger and Witten found that the $\mathcal{N}=1$ supergravity can be coupled to a NLSM. Despite the fact that in super Yang-Mills theories (i.e. the case of no gravity) the K\"ahler transformation:
\begin{align}\label{14}
K \longrightarrow K+f+f^*,
\end{align}
(where $K$ is the  K\"ahler potential and $f$ is an arbitrary holomorphic function) is a symmetry, in  supergravity, this transformation is accompanied by some phase factors. Hence, it is a transformation between at least two local coordinate neighbourhoods on a non-trivial target space. Therefore, the target space $M$ is a K\"ahler manifold (i.e. there exists a closed two-form $\Omega$ on $M$ which is called  the K\"ahler form) with nonzero first Chern class. Without loss of generality, we can only consider the case in which the (real) dimension of $M$ is two. This is called the Bagger-Witten line bundle which indeed is a canonical line bundle over $\mathbb{C}P^1$.

According to our discussions, in the presence of the FI-term, we are allowed to use the relation (\ref{13}) (see table \ref{26}).
 \begin{table}
\setlength\belowcaptionskip{-10pt}
 \begin{center}
   \scalebox{1}{
     \begin{tabular}{|c|c|c|}
      \hline
      $\;$ & \thead{supersymmetric   \\ Dirac  monopole}  & \thead{supergravity \\ coupled to NLSM}\\ \hline
      \thead{fiber bundle \\ structure} &  \thead{U(1) bundle \\ over $\mathbf{S}^2$} & \thead{canonical line bundle  \\ over $\mathbb{C}P^1$} \\ \hline
      \thead{closed 2-form}  &   \thead{field strength  \\ form $\mathcal{F}$}   &   \thead{K\"ahler form $\Omega$}  \\ \hline
      \thead{the nonzero  \\  first Chern class}  & \thead{determines the \\ U(1) flux, $g$} & \thead{determines the  \\ parameter $\chi$} \\ \hline
      \thead{ transformation in the  \\overlap region}  &   \thead{gauge \\ transformation} &  \thead{K\"ahler transformation}  \\ \hline
      \end{tabular}
      }
      \caption{The relation between the supersymmetric Dirac monopole and $\mathcal{N}=1$, supergravity coupled to a NLSM.}
      \label{26}
  \end{center}
   \label{kt1}
\end{table}
Note that the only sphere which admits a complex structure is $\mathbf{S}^2$, and since $\mathbf{S}^2 \simeq \mathbb{C}P^1$, it is obvious that $\mathbf{S}^2$ is a K\"ahler manifold.

In supergravity, $\chi$ does not relate to the space-time U(1) flux directly. However, one can still determine it as a free parameter of the theory in the (semi)classical limit. In order to use (\ref{13}), we should calculate the first Chern class of the line bundle which can be computed by using the Fubini-Study metric.

Let $L\xrightarrow{\; \pi \;} \mathbb{C}P^1$ be the canonical line bundle over $\mathbb{C}P^1$. By using the Fubini-Study metric, we realize that the closed curvature (1,1)-form is
\begin{align}\label{15}
\Omega=i \partial \bar{\partial}\;  \mathrm{ln} (1+|z|^2)=i \frac{\mathrm{d}z \wedge \mathrm{d}\bar{z}}{(1+|z|^2)^2}.
\end{align}
Moving to the real coordinates, one can obtain
\begin{align}\label{16}
\Omega=2 \frac{\mathrm{d}x \wedge \mathrm{d}{y}}{(1+x^2+y^2)^2}=2 \frac{r \mathrm{d}r \wedge \mathrm{d}{\theta}}{(1+r^2)^2}.
\end{align}
From the relation $c_1({\Omega})=-\frac{1}{2\pi} \Omega$, we have
\begin{align}\label{17}
c_1(\Omega)=- \frac{1}{\pi}  \frac{r \mathrm{d}r \wedge \mathrm{d}{\theta}}{(1+r^2)^2}.
\end{align}
We denote the integral of $c_1(\Omega)$ over $\mathbf{S}^2$ by $C_1(L)$, which is an integer:
\begin{align}\label{18}
C_1(L)=- \frac{1}{\pi} \int \frac{r \; \mathrm{d}r \mathrm{d}\theta}{(1+r^2)^2}=- \int_1 ^{\infty} t^{-2} \mathrm{d}t=-1.
\end{align}
Plugging this into (\ref{13}) yields the final result:
\begin{align}\label{19}
\chi=2n,
\end{align}
which is consistent with the evaluation of $\chi$, derived in \cite{Ref7,Ref8}. Note that we expect the phase to appear in the fermionic matter field transformation, which is also true.

Newton's constant $G$ appears in the relation between the 4d FI parameter as a surface charge density and the 2d FI parameter, as follows: $\chi_{4d}=l_p^{-2 }\; \chi_{2d}=\frac{c^3}{\hbar G} \; (2e)$.

\section{Conclusion}

Superconductivity is a condense matter analog of the Higgs mechanism. A condensate of Cooper pairs caused a spontaneous breaking of the U(1) gauge symmetry. One can experimentally measure a magnetic flux $\Phi_m$ externally applied to a superconducting ring. By using the experimental value of the quantum of this flux $\Phi_0$, a fundamental electric charge $q$ can be determined as follows: $q=2 \pi \hbar / \Phi_0$. The charge was found to have a value of $q=2e$. This is in agreement with the fact that so far, there is no other experimental method for measuring the quantum of magnetic flux except that by using a superconductor for trapping it (thanks to the Meissner effect). This fundamental electric charge then coincides with an effective charge of two electrons of an effective quasiparticle, in other words, a Cooper pair. It has been claimed that {\it any} measurement of magnetic flux e.g. magnetic flux of a supersymmetric Dirac monopole, also reveals this fundamental electric charge. Therefore, the conservation of charge implies that in a classical scattering process, we are allowed to anticipate a decay of a magnetic monopole to a (or integer number of) pair(s) of leptons with the same electric charge. For early universe monopoles, such a decay can also address the baryon asymmetry problem.

However, evaluation of such a peculiar vertex in the conventional QFT is indeed a formidable task. Anyhow, a {\it universal interaction between electron 
and magnetic field} means that at the level of QM, the g-factor for a moving electron {\it inside} a magnetic region is equivalent to the Cooper factor for moving a electron {\it outside} a magnetic (or singular source) region.

In a conventional quantum U(1) gauge theory with a 4d Minkowski space-time as the base space, the bundle $P=\mathbb{R}^{(1,3)} \times \mathrm{U(1)}$ is a trivial one with a single local trivialization over base space. In the case of a non-trivial base space, we can use an old technique to build a soft monopole from a singular one: we know that an ordinary Dirac monopole can be submerged in a SU(2) gauge theory. Then, by introducing a scalar field $\phi$ with a constant VEV and performing a special SU(2) gauge transformation, the Dirac string disappears and we end up with the usual hedgehog form of the gauge connection of the `t Hooft-Polyakov monopole. Ultimately, the monopole source is replaced by the Higgs field. One can show that $g_{\mathrm{`tP}}=2g_{\mathrm{D}}$. Therefore, replacing $g_\mathrm{D}$ by $g_{\mathrm{`tP}}$ in  (\ref{10}) gives rise to the standard form of the condition that yields the correct result (see below the equation (\ref{2})).

A defect in 4d U(1) gauge theory generates a violation of Lorentz invariance. Therefore, the Lorentz gauge is not reliable in this case. Instead, one can use the solution (\ref{7}) for eliminating one of the DoFs.

\hfill

\appendix
\section{More on the dimensional analysis}\label{appen1}

The free Maxwell action in vacuum with the dimension of $\hbar$ is given by
\begin{align}\label{aa1}
S=-\frac{1}{4 \mu_0}  \int \mathrm{d}t \; \mathrm{d}^3 x \; \mathcal{F}^{\mu\nu}\mathcal{F}_{\mu\nu}=-\frac{1}{4 \mu_0 c} \int \mathrm{d}^4 x \; \mathcal{F}^{\mu\nu}\mathcal{F}_{\mu\nu}.
\end{align}
So one might be tempted to insert this $\mu_0 c$ factor into the problem. Therefore, the Wilson loop becomes
\begin{align}\label{aa2}
\mathrm{exp} \; \Big(\frac{i \chi}{2\hbar \mu_0 c}\int \mathrm{d}^2x \; D \Big)=\mathrm{exp} \; \Big(\frac{i \chi}{2\hbar}\oint \mathrm{d}\mathbf{r}. \frac{\mathbf{A}}{\mu_0 c} \Big).
\end{align}\label{aa3}
In principle, we can use the duality transformation:
\begin{align}
\frac{1}{c} j_m \longrightarrow \mu_0 j_e,
\end{align}
from which we find the dimension of the dual gauge connection:
\begin{align}\label{aa4}
\mathbf{[A]}=\Big[ \frac{\mathrm{Wb}}{\mathrm{L}} \Big] =\Big[ \frac{\mu_0 c Q}{\mathrm{L}} \Big].
\end{align}
In this sense, we realize that nothing has changed, i.e.  $\chi /2$ represents $g$ with the dimension of $\mathrm{Wb}$, and the integral represents $e$, with the dimension of $Q$.

Note that the relation (\ref{aa4}) implies that a topological charge $q_m$, is associated with an electric charge $q_e$, by
\begin{align}\label{aa5}
q_m & \longrightarrow \mu_0 c q_e  \;\;\;\;\; (\mathrm{Weber  \; convention}),\\ \nonumber
q_m & \longrightarrow c q_e \;\;\;\;\; (\mathrm{Ampere} \textrm{-} \mathrm{meter \; convention}).
\end{align}

\section{A theorem}\label{appen2}
\begin{thm}\label{th1}
On an infinitesimal loop in the $\mathbb{C}$-plane which surrounds the line, $A_z \propto 1/z$ is always a valid nonzero solution.
\end{thm}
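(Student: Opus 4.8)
The plan is to reduce the claim to the two‑dimensional problem one obtains by restricting to the $\mathbb{C}$-plane, where the line defect appears as the single point $z=0$. From the discussion around (\ref{3})--(\ref{4}), outside the line $\mathrm{d}\mathcal{F}=0$, so locally $\mathcal{F}=\mathrm{d}A$; writing $A=A_z\,\mathrm{d}z+A_{\bar z}\,\mathrm{d}\bar z$ on the $\mathbb{C}$-plane, the only nontrivial component of the redefined field strength is $\mathcal{F}_{23}\propto\partial_{\bar z}A_z-\partial_z A_{\bar z}$, and (\ref{4}) forces this to be supported entirely at $z=0$ (the localized D-term source) while vanishing elsewhere. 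The Poincar\'e-lemma discussion above already tells us that $\mathcal{F}$ need not be globally exact on $\mathbb{C}\setminus\{0\}$, so I would look for an $A$ that is holomorphic on $\mathbb{C}\setminus\{0\}$ and concentrates the whole obstruction to exactness in a single pole at the origin; by rotational symmetry about the line the natural candidate is $A_z=c/z$ (with $A_{\bar z}$ taken to vanish, or to be the conjugate term).

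The key computational step is the distributional identity $\partial_{\bar z}(1/z)=\kappa\,\delta^{(2)}(z)$, where $\kappa$ is a fixed constant whose value depends on the normalization of $\delta^{(2)}$ and on the factor $\tfrac12$ in $z=\tfrac12(x_2+\mathrm{i}x_3)$. This shows at once that $A_z=c/z$ makes $\mathcal{F}$ vanish identically away from the line while reproducing exactly a delta-function D-term source at $z=0$, so it solves (\ref{3}) in the sense of (\ref{4}). That the solution is genuinely nonzero then follows from the holonomy: around the infinitesimal loop $\gamma$ surrounding the line the residue theorem gives $\oint_\gamma A=2\pi\mathrm{i}\,c$, and since this holonomy is the gauge-invariant AB phase of (\ref{6}) it must equal a fixed multiple of the enclosed flux, which by the relation $S_{FI}/\hbar=2\pi n$ recorded below (\ref{5}) is proportional to the integer $n$. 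In any sector with $n\neq 0$ this forces $c\neq 0$, so $A_z\propto 1/z$ is a valid \emph{nonzero} solution and cannot be gauged away. The word ``always'' is justified because the only ingredients used are $\mathrm{d}\mathcal{F}=0$ off the line, the localization of the source in the D-term, and the topology $\mathbb{C}\setminus\{0\}\simeq\mathbf{S}^1$ with its nontrivial first cohomology; none of these depends on the amount of supercharges, the number of chiral multiplets, or the detailed brane dynamics of the construction.

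I expect the main obstacle to be distributional bookkeeping rather than anything conceptual: one must check that the $1/z$ singularity feeds into $\mathrm{d}\mathcal{F}$ \emph{precisely} the localized term dictated by the redefinition (\ref{4}) and nothing extra, and that the residual freedom in $A_{\bar z}$ and in singular gauge transformations neither spoils the closedness of $\mathcal{F}$ away from the line nor changes the value of $\oint_\gamma A$. Pinning down $\kappa$ consistently with the SI/Weber conventions and with the FI normalization in (\ref{5}) is where factor-of-$2$ and sign errors are most likely to creep in, so I would carry out that dimensional and normalization check — along the lines of the appendix — before identifying the holonomy with the enclosed flux.
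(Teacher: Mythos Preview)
Your argument is correct but proceeds along a different line than the paper's. The paper does not invoke the distributional identity $\partial_{\bar z}(1/z)\propto\delta^{(2)}$ at this stage; instead it argues purely on the loop: starting from the existence of at least one connection $\boldsymbol{\mathcal{A}}$ guaranteed by the Poincar\'e lemma on the punctured plane, it observes that on the infinitesimal circle $z=|z|e^{i\phi}$ one has $dz/z=i\,d\phi$, so $1/z$ is locally pure gauge and $\mathcal{A}_z\mapsto\mathcal{A}_z+1/z$ is a legitimate gauge shift to another connection; since the loop is infinitesimal the added $1/z$ dominates and the result is $\approx 1/z$, which is therefore a valid nonzero solution. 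The BPS-type relation $\bar\partial A_z+c.c.\propto\delta^{(2)}$ you use as input is in the paper a \emph{consequence} of the theorem (equation (\ref{9})), not a premise.

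What your route buys is directness and a cleaner link to the flux: the residue/holonomy computation ties $c$ to the integer $n$ in one step and simultaneously produces the delta-localized D-term, so the subsequent identification with (\ref{4}) and (\ref{11}) is immediate. What the paper's route buys is that it stays entirely on the infinitesimal loop and needs no distribution theory or normalization constants $\kappa$; it only uses that closed one-forms on a circle are locally exact. One small caveat on your side: your ``nonzero'' clause is conditioned on $n\neq 0$, whereas the paper's statement and proof assert that $A_z\propto 1/z$ is nonzero unconditionally (it is manifestly nonzero as a function, and the gauge-shift argument shows it is always an admissible connection on the loop). You may want to decouple ``valid'' from ``carries the flux of the given sector'' so that the nonvanishing is not tied to a choice of topological sector.
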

\begin{proof}
It follows from Poincar\'e's lemma for a non-trivial topology that there always exist at least two solutions of which at least one is nonzero which we denote by $\mathbf{A}$, and the other by $\boldsymbol{\mathcal{A}}$. First, we claim that for a given solution $\boldsymbol{\mathcal{A}}$ on this loop the below relation,
\begin{align}\label{7}
\mathcal{A}_z \longrightarrow \mathcal{A}_z+\frac{1}{z} \approx \frac{1}{z},
\end{align} 
is a gauge transformation. So we must show that $1/z$ is a pure gauge on the loop. This is true  because the integrand of the integral of $1/z$ over the loop, namely,
\begin{align}\label{8}
\oint \mathrm{d}{z} \frac{1}{z}=\mathrm{i} \oint \mathrm{d} \phi,
\end{align} 
is a total derivative where $\phi$ is the azimuthal angle that parametrizes the infinitesimal loop. Since at every point out of the line, the gauge transformation of any given connection $\boldsymbol{\mathcal{A}}$ gives another connection, so at every point on the infinitesimal loop we find $A_z \propto 1/z$ is also a valid nonzero solution. QED 
\end{proof}



\begin{thebibliography}{}


\bibitem{Ref1}
L.~N.~Cooper,
``Bound electron pairs in a degenerate Fermi gas,''
Phys. Rev. \textbf{104}, 1189-1190 (1956)


\bibitem{Ref2}
M.~Porforough,
``Magnetic charge quantization from SYM considerations,''
EPL \textbf{127}, no.4, 41001 (2019)
[arXiv:1701.05103 [hep-th]].


\bibitem{Ref3}
J.~Bardeen, L.~N.~Cooper and J.~R.~Schrieffer,
``Microscopic theory of superconductivity,''
Phys. Rev. \textbf{106}, 162 (1957)
  

\bibitem{Ref4}
E.~Mintun, J.~Polchinski and S.~Sun,
``The Field Theory of Intersecting D3-branes,''
JHEP \textbf{08}, 118 (2015)
[arXiv:1402.6327 [hep-th]].
       

\bibitem{Ref5}
T.~T.~Wu and C.~N.~Yang,
``Concept of Nonintegrable Phase Factors and Global Formulation of Gauge Fields,''
Phys. Rev. D \textbf{12}, 3845-3857 (1975)
        

\bibitem{Ref6}
N.~Arkani-Hamed, T.~Gregoire and J.~G.~Wacker,
``Higher dimensional supersymmetry in 4-D superspace,''
JHEP \textbf{03}, 055 (2002)
[arXiv:hep-th/0101233 [hep-th]].
   

\bibitem{Ref7}
N.~Seiberg,
``Modifying the Sum Over Topological Sectors and Constraints on Supergravity,''
JHEP \textbf{07}, 070 (2010)
[arXiv:1005.0002 [hep-th]].

\bibitem{Ref8}
J.~Distler and E.~Sharpe,
``Quantization of Fayet-Iliopoulos Parameters in Supergravity,''
Phys. Rev. D \textbf{83}, 085010 (2011)
[arXiv:1008.0419 [hep-th]].

\bibitem{Ref9}
E.~Witten and J.~Bagger,
``Quantization of Newton's Constant in Certain Supergravity Theories,''
Phys. Lett. B \textbf{115}, 202-206 (1982)






\end{thebibliography}
\end{document}